\newlist{inlinelist}{enumerate*}{1}
\setlist*[inlinelist,1]{%
  label=(\roman*),
}
\newcommand{\iid}{i.i.d.}
\newcommand{\dimProbSpace}{k}
\newcommand{\psd}[1]{\mathbb{S}^{#1}_{\pplus}}
\newcommand{\pd}[1]{\mathbb{S}^{#1}_{\pplus\pplus}}
\newcommand{\amb}{\mathcal{A}} % Ambiguity set
\newcommand{\ambTV}{\amb^{\ell_1}}
\newcommand{\simplex}{\Delta}
\newcommand{\rdkw}{r_{\mathrm{DKW}}}
\newcommand{\rmcd}{r_{\mathrm{M}}}
\newcommand{\adkw}{\alpha_{\mathrm{DKW}}}
\newcommand{\amcd}{\alpha_{\mathrm{M}}}
\newcommand{\argdot}{\hspace{0.18em}\cdot\hspace{0.18em}}
\newcommand{\Ac}{\bar{A}} % closed loop matrix
\newcommand{\stabRegion}{\mathcal{S}}
\DeclareMathOperator*\spectral{\rho}% 
\newcommand{\LDRS}[1]{\mbox{#1-\text{LDRS}}}
\newcommand{\LRS}{\text{LRS}}
\newcommand{\LSS}[1]{\mbox{#1-\text{LSS}}}
\crefname{apdx}{appendix}{appendices}
\Crefname{apdx}{Appendix}{Appendices}
\title{\LARGE \bf
Safe Learning-Based Control of Stochastic Jump Linear Systems: a Distributionally Robust Approach
}
\author{Mathijs~Schuurmans$^{\dagger}$, Pantelis~Sopasakis$^{\ddag}$ and Panagiotis~Patrinos$^{\dagger}$% <-this % stops a space
\thanks{$^{\dagger}$M. Schuurmans and P. Patrinos are with the Department 
  of Electrical Engineering (\textsc{esat-stadius}), KU Leuven, 
  Kasteelpark Arenberg 10, 3001 Leuven, Belgium.
       Email: \texttt{\{mathijs.schuurmans, panos.patrinos\}@esat.kuleuven.be}}%
\thanks{$^{\ddag}$P. Sopasakis is with Queen's University Belfast, School of Electronics, Electrical Engineering and Computer Science,
               Centre for Intelligent Autonomous Manufacturing Systems, BT9 5AH, Northern Ireland, UK.
               Email: \texttt{p.sopasakis@qub.ac.uk}}%
\thanks{The work of the first and third authors was supported by: FWO projects: No. G086318N; No. G086518N; 
        Fonds de la Recherche Scientifique -- FNRS, the Fonds Wetenschappelijk Onderzoek 
        -- Vlaanderen under EOS Project No. 30468160 (SeLMA), 
        Research Council KU Leuven C1 project No. C14/18/068 and the 
        Ford--KU Leuven Research Alliance project No. KUL0023.}% <-this % stops a space        
}%
\begin{document}

\maketitle
\thispagestyle{empty}
\pagestyle{empty}

%%%%%%%%%%%%%%%%%%%%%%%%%%%%%%%%%%%%%%%%%%%%%%%%%%%%%%%%%%%%%%%%%%%%%%%%%%%%%%%%
\begin{abstract}
%!TEX root = ./main.tex
We consider the problem of designing control laws for stochastic jump linear systems where the disturbances are drawn randomly from a finite sample space according to an unknown distribution, which is estimated from a finite sample of \iid{} observations. We adopt a distributionally robust approach to compute a mean-square stabilizing feedback gain with a given probability. The larger the sample size, the less conservative the controller, yet our methodology gives stability guarantees with high probability, for any number of samples. Using tools from statistical learning theory, we estimate confidence regions for the unknown probability distributions (ambiguity sets) which have the shape of total variation balls centered around the empirical distribution. We use these confidence regions in the design of appropriate distributionally robust controllers and show that the associated stability conditions can be cast as a tractable linear matrix inequality (LMI) by using conjugate duality. The resulting design procedure scales gracefully with the size of the probability space and the system dimensions. Through a numerical example, we illustrate the superior sample complexity of the proposed methodology over the stochastic approach.
\end{abstract}
%%%%%%%%%%%%%%%%%%%%%%%%%%%%%%%%%%%%%%%%%%%%%%%%%%%%%%%%%%%%%%%%%%%%%%%%%%%%%%%%
%!TEX root = ./main.tex
\section{Introduction} \label{sec:intro}
\subsection{Background and motivation}
% ----------------------------------------------------------------------
% DATA-DRIVEN/LEARNING-BASED approaches to handle uncertainty in control 
% ----------------------------------------------------------------------
The ever-decreasing costs of measuring, communicating and storing data
have led to a variety of opportunities to apply learning-based and data-driven
methodologies in control~\cite{rosolia2018learning,domahidi2011learning}.
These opportunities are of particular interest for systems with inherent 
stochastic uncertainty, as data-driven methodologies may be used to 
reduce conservativeness in controller design, while retaining 
safety guarantees.

% ----------------------------------------------------------------------
% DISTRIBUTIONALLY ROBUST APPROACH -- what and why?
% ----------------------------------------------------------------------
A natural way of addressing this trade-off is by adopting a \textit{distributionally robust} approach~\cite{dupavcova1987minimax,shapiro2002minimax}, 
which is gaining popularity in many fields including machine learning~\cite{smirnova2019distributionally} and control~\cite{van2016distributionally,yang2018wasserstein}.
It provides a framework which inherently accounts for uncertainty on probability estimates by generalizing two opposing approaches of
\textit{stochastic} and \textit{robust} control~\cite{sopasakis2019risk}. 
Performance and safety guarantees of the former~\cite{patrinos2014stochastic} require full knowledge of the underlying probability 
distribution of involved random variables, which in practice is only available by approximation. The robust approach, on the other hand,
aims at providing guarantees in the worst possible realization of the uncertain variables.
This disregard for available statistical knowledge typically leads to overly conservative solutions or infeasibility.
By contrast, the distributionally robust framework imposes robustness only with respect to a given set of probability distributions,
often called \textit{ambiguity sets}. The challenge is to appropriately design this ambiguity set in order to make a suitable trade-off between safety and performance.

% ----------------------------------------------------------------------
% DATA-DRIVEN AMBIGUITY ESTIMATION
% ----------------------------------------------------------------------
In the past few years, the stochastic optimization community has proposed several methods for building ambiguity sets from data,
and solving corresponding optimization problems~\cite{delage2010distributionally,bertsimas2018data}. One popular approach is to estimate the unknown distribution (e.g., by the \emph{empirical estimate})
and to construct the ambiguity set as the set of distributions within some statistical distance, such as the \emph{Wasserstein distance}~\cite{esfahani2018data,gao2016distributionally}
or \emph{$\phi$-divergences}~\cite{ben2013robust} from this estimate. 
In this paper, we follow this line of reasoning and restrict the considered class of ambiguity sets to be the $\ell_1$-norm ball around the 
empirical probability estimate. Many of the obtained results can however be extended to more general classes of convex ambiguity sets, given the appropriate modifications.  

\subsection{Main contributions}
Firstly, we propose a data-driven, distributionally robust design methodology for synthesizing static feedback control gains for stochastic jump linear systems,
which, for \emph{any} finite sample size grants mean-square stability to the closed-loop system at a given confidence level.

Secondly, we propose a reformulation of the resulting stability conditions and approximate it by a tractable linear matrix inequality (LMI), which avoids enumerating the extreme points of
the polytopic $\ell_1$-ambiguity set. We demonstrate the computational gains of this formulation and show that, in practice, the induced conservativeness is very limited.

\subsection{Notation}

Let $I_n$ be the $n\times n$ identity matrix and $1_n \dfn (1)_{i=1}^n$. 
Let the sets of symmetric positive definite and positive semi-definite $n \times n$ matrices
be denoted as $\pd{n}$ and $\psd{n}$, respectively. 
We denote by $\otimes$ the Kronecker product. 
For $x,y\in\Re$, we define $\1_{y}(x)$ to be equal to $1$ if $x=y$, and $0$ otherwise.
We denote the expectation operator by $\E[\argdot]$ and the probability simplex by $\simplex_k \dfn \{ p \in \Re^\dimProbSpace {}\mid{} p_i \geq 0,\, \sum_{i=1}^{\dimProbSpace} p_i = 1\}$.
We define $\N_{[a, b]} \dfn \{ i \in \N \mid a \leq i \leq b\}$. 
The spectral radius of a matrix A is denoted $\spectral(A)$. We denote the dimensions of a vector $x$ by $n_x$. $\ball_p(x, r)$ is the $\ell_p$-norm ball of radius $r$ around $x \in \Re^{n_x}$.
Finally, we denote the support function and indicator function of a set $C$ by 
$\sigma_{C}(\argdot)$ and $\delta_{C}(\argdot)$ respectively.  

%!TEX root = ./main.tex

\section{Problem statement} \label{sec:problem-statement}
\subsection{Stabilizing control of stochastic jump linear systems}
This paper considers the control of  discrete-time stochastic jump linear dynamical systems
with random disturbances $w_t$:
\begin{equation} \label{eq:system-dynamics}
    x_{t+1} = A(w_t) x_t + B(w_t) u_t.
\end{equation}
The disturbances $w_t$ take values on the finite sample space $\mathcal{W}{}\dfn{}\N_{[1,k]}$ equipped with the discrete
$\sigma$-algebra $2^{\mathcal{W}}$. 
For all $i \in \N_{[1,\dimProbSpace]}$, we introduce the notation 
$A_i{}\dfn{}A(i)$ and $B_i{}\dfn{}B(i)$.
Furthermore, let 
\(\prob {}:{} 2^{\mathcal{W}} \rightarrow \Re\),
with \(\prob[w=i] {}={} \prob[\{i\}] {}={} p_i\)
be a probability measure, such that $(\mathcal{W}, 2^{\mathcal{W}}, \prob)$ defines a probability space.
Note that system~\eqref{eq:system-dynamics} is a specific type of Markov jump linear system (MJLS), 
where all the rows in transition probability matrix are identical. 
Consider now the analogously defined autonomous system 
\begin{equation} \label{eq:dynamics-autonomous}
    x_{t+1} = A(w_t) x_t,
\end{equation}
for which the following fundamental notion of stability is defined.

%
% DEFINITION:
% Mean-square stability
%

\begin{definition}[Mean-square stability{\cite[Def. 3.8]{costa2006discrete}}]
    We say that the autonomous system~\eqref{eq:dynamics-autonomous} is mean-square stable (MSS) if
    for each $x_0 \in \Re^{n_x}$: 
    \begin{inlinelist}
        \item  $\| \E[x_t] \| \rightarrow 0$; and
        \item $\| \E[x_t \trans{x_t}] \| \rightarrow 0 $,
    \end{inlinelist}
    as $t \rightarrow \infty$.
\end{definition}
This property can be verified by means of the following well-known conditions. 
\begin{thm}[Conditions for MSS] \label{thm:MS-stability-conditions}
    Defining the operator $T:\Re^{\dimProbSpace} \to \Re^{\dimProbSpace n_x^2 \times \dimProbSpace n_x^2}$ as  
    \begin{equation} \label{eq:operator-stability}
     T(p) \dfn (\trans{p} \otimes 1_\dimProbSpace \otimes I_{n_x^2}) \cdot \blkdiag(\{\trans{A}_i \otimes A_i\}_{i \in \N_{[1,\dimProbSpace]}}),
    \end{equation}
    the following statements are equivalent:
    \begin{enumerate}[label=(S\arabic*)]
        \item System~\eqref{eq:dynamics-autonomous} is MSS.
        \item $\spectral{(T(p)) < 1}$. \label{item:MS-condition-spectral}
        \item $\exists P \in \pd{n_x} {}:{} \sum_{i=1}^{\dimProbSpace} p_i \trans{A}_i P A_i - P \prec 0$.\label{item:MS-condition-Lyap}
    \end{enumerate}
\end{thm}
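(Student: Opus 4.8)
The plan is to prove the three equivalences by introducing the two linear maps that govern the evolution of the first and second moments of the state, and then translating spectral and Lyapunov statements about these maps into the matrix language of the theorem. Since $w_t$ is drawn independently of the past, and hence of $x_t$, taking expectations in $x_{t+1}\trans{x_{t+1}} = A(w_t) x_t \trans{x_t} \trans{A(w_t)}$ yields the recursion $\E[x_{t+1}\trans{x_{t+1}}] = \mathcal{L}(\E[x_t\trans{x_t}])$, where I define the positive linear map $\mathcal{L}(X) \dfn \sum_{i=1}^{\dimProbSpace} p_i A_i X \trans{A}_i$ on the symmetric matrices; similarly $\E[x_{t+1}] = \Ac\,\E[x_t]$ with $\Ac = \sum_{i=1}^{\dimProbSpace} p_i A_i$. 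Consequently $\E[x_t\trans{x_t}] = \mathcal{L}^t(x_0\trans{x_0})$ and $\E[x_t] = \Ac^t x_0$, so the whole statement reduces to facts about $\mathcal{L}$.

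First I would establish that \ref{item:MS-condition-spectral} is equivalent to mean-square stability. Because $\mathcal{L}^t \to 0$ as $t\to\infty$ if and only if $\spectral(\mathcal{L}) < 1$ (a finite-dimensional linear map), and because the rank-one matrices $x_0\trans{x_0}$ generate the cone $\psd{n_x}$, condition (ii) in the definition of MSS holds for every $x_0$ exactly when $\spectral(\mathcal{L}) < 1$. It then remains to argue that (ii) already implies (i): from $\E[x_t]\,\trans{\E[x_t]} \preceq \E[x_t\trans{x_t}]$ we get $\|\E[x_t]\|^2 \le \|\E[x_t\trans{x_t}]\|$, so convergence of the second moment forces convergence of the first, and no separate condition on $\Ac$ is needed. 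Hence MSS $\Leftrightarrow \spectral(\mathcal{L}) < 1$, and I would close this step by showing $\spectral(\mathcal{L}) = \spectral(T(p))$. This is vectorization bookkeeping: applying $\mathrm{vec}$ to the recursion turns $\mathcal{L}$ into a matrix, and unwinding the Kronecker structure of $T(p)$ — where the factor $\trans{p}\otimes 1_\dimProbSpace \otimes I_{n_x^2}$ collapses the $\dimProbSpace$ identical diagonal blocks, reflecting that all rows of the transition matrix equal $\trans{p}$ — shows that $T(p)$ is a matrix representation of $\mathcal{L}$ and therefore shares its spectral radius.

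Next I would prove \ref{item:MS-condition-spectral} $\Leftrightarrow$ \ref{item:MS-condition-Lyap} through the Lyapunov theory of positive operators. Working with the trace-adjoint $\mathcal{L}^*(P) = \sum_{i=1}^{\dimProbSpace} p_i \trans{A}_i P A_i$, for which $\spectral(\mathcal{L}^*) = \spectral(\mathcal{L})$, the inequality in \ref{item:MS-condition-Lyap} reads $\mathcal{L}^*(P) \prec P$ with $P \in \pd{n_x}$. The ``if'' direction is direct: if $\mathcal{L}^*(P) \prec P$ for some $P \succ 0$, then $\mathcal{L}^*(P) \preceq c\,P$ for some $c \in [0,1)$, and monotonicity of the positive map $\mathcal{L}^*$ on $\psd{n_x}$ gives $(\mathcal{L}^*)^t(P) \preceq c^t P \to 0$, forcing $(\mathcal{L}^*)^t \to 0$ and thus $\spectral(\mathcal{L}^*) < 1$. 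For the converse I would use the Neumann series: when $\spectral(\mathcal{L}^*) < 1$, the sum $P \dfn \sum_{t=0}^{\infty} (\mathcal{L}^*)^t(Q)$ converges for any $Q \in \pd{n_x}$, lies in $\pd{n_x}$ (every term is positive semidefinite and the $t=0$ term is positive definite), and satisfies $\mathcal{L}^*(P) = P - Q \prec P$, which is exactly \ref{item:MS-condition-Lyap}.

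The routine parts are the moment recursions and the Neumann-series construction. The step I expect to be most delicate is the identification $\spectral(\mathcal{L}) = \spectral(T(p))$: one must fix a single vectorization convention and carefully track the placement of transposes (the $\trans{A}_i \otimes A_i$ ordering) together with the redundancy introduced by the $1_\dimProbSpace$ factor, verifying that the block-diagonal, augmented arrangement contributes no spurious eigenvalues of modulus $\geq 1$ outside the subspace on which the genuine second-moment dynamics evolve.
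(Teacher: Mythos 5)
Your overall architecture is sound, and it is considerably more informative than the paper's own ``proof,'' which is a bare citation of \cite{costa2006discrete}. The parts you call routine are indeed correct and are exactly the standard arguments underlying that citation: the moment recursions follow from independence of $w_t$ and $x_t$; the covariance inequality $\E[x_t]\trans{\E[x_t]} \preceq \E[x_t\trans{x_t}]$ shows the first-moment condition is subsumed by the second; rank-one matrices span the symmetric matrices, so MSS is equivalent to $\spectral(\mathcal{L})<1$; and both directions of $\spectral(\mathcal{L})<1 \Leftrightarrow$ (S3) (monotonicity of the positive map $\mathcal{L}^*$ one way, the Neumann series $P=\sum_{t\geq 0}(\mathcal{L}^*)^t(Q)$ the other) are correct.

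The genuine gap is precisely the step you deferred as ``vectorization bookkeeping,'' and it cannot be closed, because the identity $\spectral(T(p))=\spectral(\mathcal{L})$ is \emph{false} for $T(p)$ as literally defined. Writing $T(p)=UV$ with $U=1_{\dimProbSpace}\otimes I_{n_x^2}$ and $V$ the block row $\left(p_1(\trans{A}_1\otimes A_1),\ldots,p_{\dimProbSpace}(\trans{A}_{\dimProbSpace}\otimes A_{\dimProbSpace})\right)$, the nonzero spectrum of $T(p)$ equals that of $VU=M\dfn\sum_{i}p_i\,\trans{A}_i\otimes A_i$. But under the column-stacking convention $M$ represents the map $X\mapsto\sum_i p_i A_i X A_i$ (both factors on the same side, no transpose pairing), \emph{not} $\mathcal{L}(X)=\sum_i p_i A_i X\trans{A}_i$, whose matrix is $\sum_i p_i A_i\otimes A_i$; and these two matrices can have different spectral radii. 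Concretely, let $n_x=2$, $\dimProbSpace=3$, $p=(\tfrac13,\tfrac13,\tfrac13)$, $A_1=aE_{12}$, $A_2=aE_{21}$, $A_3=aE_{11}$, where $E_{lm}$ has a single $1$ in entry $(l,m)$. A direct computation gives $M=\tfrac{a^2}{3}\left(E_{11}\otimes E_{11}+E_{21}\otimes E_{12}+E_{12}\otimes E_{21}\right)$, whose eigenvalues are $\tfrac{a^2}{3}$ (twice), $-\tfrac{a^2}{3}$, and $0$, so $\spectral(T(p))=\tfrac{a^2}{3}$. Yet the second moments $m_j(t)=\E[x_{t,j}^2]$ obey a Fibonacci-type recursion, $m_1(t+1)=\tfrac{a^2}{3}(m_1(t)+m_2(t))$, $m_2(t+1)=\tfrac{a^2}{3}m_1(t)$, so $\spectral(\mathcal{L})=\tfrac{a^2}{3}\cdot\tfrac{1+\sqrt{5}}{2}$. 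Taking $\tfrac{a^2}{3}=0.9$ yields $\spectral(T(p))=0.9<1$ while $\E[x_t\trans{x_t}]$ diverges from $x_0=e_1$ and (S3) is infeasible: (S2) as written does not imply (S1). The culprit is the transpose in the definition of $T(p)$: in \cite{costa2006discrete} the Kronecker factor is $\bar{A}_i\otimes A_i$ with an \emph{elementwise conjugate}, which for real matrices is $A_i\otimes A_i$. With that factor your plan goes through in full (including your last concern: positivity of $\mathcal{L}$ gives $\|\mathcal{L}^t(N)\|\leq\|N\|\,\|\mathcal{L}^t(I)\|$ via the sandwich $-\|N\|I\preceq N\preceq\|N\|I$, extended by complexification, so the ambient $\dimProbSpace n_x^2$-dimensional arrangement contributes no spurious spectral radius). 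As it stands, however, a careful execution of your proof would have to discover and correct the statement's operator rather than verify the identification you asserted.
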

\begin{proof}
    Follows directly from~\cite[thm. 3.9 and Cor. 3.26]{costa2006discrete}.
\end{proof}

Ideally, our objective is to compute a linear state feedback gain $K$, such that the 
closed-loop system 
\begin{equation} \label{eq:closed-loop}
x_{t+1} = \Ac(w_t) x_t = (A(w_t) + B(w_t) K)x_{t} 
\end{equation}
is MSS. Unfortunately, however, application of ~\Cref{thm:MS-stability-conditions} requires the knowledge of $p$,
which is not available in practice. Instead, we assume to have access to a finite
sample $\{w_i\}_{i=1}^N$ of $N$ independent, identically distributed (\iid{}) disturbance values. 
We will show that it is possible to leverage non-asymptotic statistical information to 
design linear feedback laws which lead to a mean-square stable closed loop with \textit{high probability}.

\subsection{Mean-square stability in probability}

The proposed distributionally robust approach to certifying MSS in probability entails
the use of the available data to determine a nonempty, closed, convex set of probability distributions 
$\amb \subseteq \simplex_{\dimProbSpace}$ so that with high confidence,
$p \in \amb$ --- such a set is called an \emph{ambiguity set}~\cite{sopasakis2019risk}. The requirement that the closed-loop system~\eqref{eq:closed-loop} is 
MSS for all $\mu \in \amb$, leads to the \emph{distributionally robust} variant of the Lyapunov-type stability 
condition~\ref{item:MS-condition-Lyap}:
\begin{equation} \label{eq:DRMS-condition-Lyap}
    \exists P \in \pd{n_x} {}:{} \max_{\mu \in \amb} \sum_{i=1}^{\dimProbSpace} \mu_i \trans{\Ac_i} P \Ac_i - P \prec 0.
\end{equation}
Due to the dual representation of coherent risk measures~\cite[Thm. 6.4]{shapiro2009lectures}, the resulting property is equivalent
to risk-square stability with respect to the risk measure induced by $\amb$~\cite{sopasakis2019risk}.

Thus, given an ambiguity set $\amb$ which includes the true distribution $p$ at a given confidence level,
one can be equally confident that a controller for which the closed-loop system satisfies~\eqref{eq:DRMS-condition-Lyap},
is mean-square stabilizing.

The existence of such a controller depends on the system at hand. Therefore, it is useful to define the following
required property of the open-loop system \eqref{eq:system-dynamics}, which can be tested by feasibility of the problems 
described in~\Cref{sec:controller-design}.
\begin{definition}[Linear distributionally robust stabilizability] \label{def:LDRS}
We say that system~\eqref{eq:system-dynamics} is linearly distributionally robustly stabilizable
with respect to an ambiguity set $\amb$ (\LDRS{$\amb$}) if there 
exists a linear state feedback law $u(x) = Kx$ such that the corresponding closed-loop system~\eqref{eq:closed-loop}
is $p$-MSS for all $p \in \amb$.
\end{definition}
\begin{remark}
Based on~\Cref{def:LDRS}, we may additionally define \emph{linear robust stabilizability} (\LRS) of \eqref{eq:system-dynamics}
as \LDRS{$\simplex_{\dimProbSpace}$}, and \emph{linear stochastic stabilizability} with respect to the distribution $\hat{p}\in \simplex_{\dimProbSpace}$ (\LSS{$\hat{p}$}) 
as \LDRS{$\{\hat{p}\}$}. Since for any $\amb_1 \subseteq \amb_2$, $\LDRS{$\amb_2$} \Rightarrow \LDRS{$\amb_1$}$, \LRS{} and LSS can be viewed as the extreme cases of LDRS.
\end{remark}

\begin{remark}
    Provided that system \eqref{eq:system-dynamics} is \LRS, 
    the proposed approach can certify MSS with arbitrary confidence, regardless of the sample size. In contrast 
    to the robust approach, however, by collecting a (small) data sample, MSS can still be certified 
    when the system is only \LDRS{$\amb$} for some ambiguity set $\amb$. The required sample size is prescribed by the bounds described below.
    We illustrate this in \Cref{sec:numerical:vs-stochastic-robust}. 
\end{remark}
%!TEX root = ./main.tex
\section{Learning-based ambiguity estimation} \label{sec:bounds}
Given $N$ independent samples $w_1, \ldots, w_N$ from the distribution of the disturbance, we
define the empirical measure $\hat{p} = (\hat{p}_i)_{i=1}^\dimProbSpace$, where
\begin{equation} \label{eq:empirical-estimate}
      \hat{p}_i 
{}\dfn{} 
      \tfrac{1}{N} {\textstyle \sum_{j=1}^{N} \1_{i}(w_j)},
\end{equation}
for all $i\in\N_{[1,\dimProbSpace]}$. We now derive upper bounds on the radius $r$ of the $\ell_1$-ambiguity set 
$\ambTV_r(\hat{p}) \dfn \{\mu \in \simplex_k \mid \| \mu - \hat{p} \|_1 \leq r\}$, such that for given $\alpha \in (0,1)$
\begin{equation} \label{eq:prob-inclusion}
\prob \big[ p \in \ambTV_r(\hat{p}) \big] \geq 1-\alpha. 
\end{equation}
Given such an ambiguity set, it is then possible to use the aforementioned stability condition \eqref{eq:DRMS-condition-Lyap}
to design controllers which are MS stabilizing with confidence $1-\alpha$. This is discussed further
in \Cref{sec:controller-design}.

\begin{figure}
\centering
\includegraphics{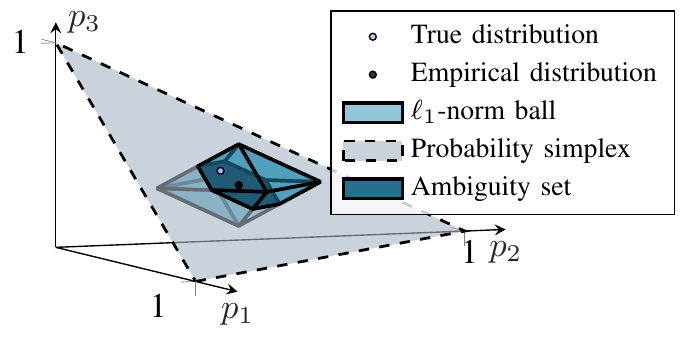}
\caption{Illustration of the $\ell_1$-ambiguity set $\ambTV(\hat{p})$ for $\dimProbSpace = 3$.}
\label{fig:ambiguity-example}
\end{figure}

\subsection{Dvoretzky-Kiefer-Wolfowitz bounds}\label{sec:DKW}
A statistical upper bound on $\lVert \hat{p}-p \rVert_1$ can be easily obtained by means of the 
\emph{Dvoretzky-Kiefer-Wolfowitz} (DKW) inequality~\cite{massart1990tight}, which probabilistically
bounds the error on the empirical estimate of the cumulative probability distribution.
We observe that this bound can be readily translated to the 
error on the probability distribution $p$.      
\begin{thm}[DKW ambiguity radius] \label{thm:radius-DKW}
Let $p, \hat{p} \in \simplex_\dimProbSpace$, $N$ be as defined in~\Cref{sec:bounds}. Then 
for any given confidence level $1-\alpha$, \cref{eq:prob-inclusion} holds with 
\begin{equation} \label{eq:radius-DKW}
  r=\rdkw(\alpha, \dimProbSpace, N) \dfn 2\dimProbSpace \sqrt{\tfrac{\ln{\nicefrac{2}{\alpha}}}{2N}}.
\end{equation}
% \begin{equation} ~\label{eq:bound-DKW}
%     \prob\left( \| \hat{p} - p \|_1 > 2 \dimProbSpace \epsilon \right) \leq 2 e^{-2 N \epsilon^2}. 
% \end{equation}
\end{thm}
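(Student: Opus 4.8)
The plan is to invoke the Dvoretzky--Kiefer--Wolfowitz inequality on the empirical cumulative distribution function (CDF) and translate its uniform sup-norm control into an $\ell_1$ bound on the probability vector. Since the sample space $\mathcal{W} = \N_{[1,\dimProbSpace]}$ is totally ordered, I would first introduce the true and empirical CDFs $F(i) \dfn \sum_{j=1}^i p_j$ and $\hat{F}_N(i) \dfn \sum_{j=1}^i \hat{p}_j$ for $i \in \N_{[1,\dimProbSpace]}$, with the convention $F(0) = \hat{F}_N(0) = 0$. In these terms, $\hat{F}_N$ is precisely the empirical CDF of the $N$ \iid{} samples, so the Massart form of the DKW inequality applies and states that $\prob\big[\sup_i |\hat{F}_N(i) - F(i)| > \epsilon\big] \leq 2e^{-2N\epsilon^2}$ for every $\epsilon > 0$.

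The crucial step is the passage from the CDF error to $\lVert \hat{p}-p\rVert_1$. Writing $D_i \dfn \hat{F}_N(i) - F(i)$ and using the telescoping identity $\hat{p}_i - p_i = D_i - D_{i-1}$, the triangle inequality gives, for each coordinate, $|\hat{p}_i - p_i| \leq |D_i| + |D_{i-1}| \leq 2\sup_j |D_j|$. Summing over the $\dimProbSpace$ coordinates then yields $\lVert \hat{p} - p\rVert_1 \leq 2\dimProbSpace \sup_j |D_j|$, which is exactly the dimension-dependent factor $2\dimProbSpace$ that appears in $\rdkw$.

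To conclude, I would invert the DKW tail bound: setting $2e^{-2N\epsilon^2} = \alpha$ gives $\epsilon = \sqrt{\ln(\nicefrac{2}{\alpha})/(2N)}$, so that with probability at least $1-\alpha$ the CDF error satisfies $\sup_j |D_j| \leq \epsilon$. On that event the bound of the previous paragraph gives $\lVert \hat{p} - p\rVert_1 \leq 2\dimProbSpace\,\epsilon = \rdkw(\alpha, \dimProbSpace, N)$, i.e.\ $p \in \ambTV_{\rdkw}(\hat{p})$, which is precisely the inclusion~\eqref{eq:prob-inclusion}.

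I expect the main (and essentially only) subtlety to lie in the CDF-to-PMF conversion of the second paragraph: DKW controls the error uniformly over the ordered CDF, whereas the ambiguity set is defined through the $\ell_1$ distance between probability vectors. It is the crude per-coordinate estimate $|\hat{p}_i - p_i| \leq 2\sup_j|D_j|$ --- valid precisely because the sample space is finite and totally ordered, so that each $\hat{p}_i$ is a difference of two consecutive CDF values --- that introduces the factor $2\dimProbSpace$. Everything else is a routine inversion of the exponential tail, and one should keep in mind that this linear-in-$\dimProbSpace$ growth is what later motivates comparing $\rdkw$ against tighter, dimension-aware alternatives.
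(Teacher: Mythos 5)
Your proposal is correct and follows essentially the same route as the paper's own proof: invoke the Massart--DKW inequality on the empirical CDF, convert the uniform CDF error to a per-coordinate pmf bound via differences of consecutive CDF values (picking up the factor $2$), sum over the $\dimProbSpace$ coordinates to get the factor $2\dimProbSpace$, and invert the exponential tail to solve for $\epsilon$. The only cosmetic difference is that you handle the $i=1$ case via the convention $\hat{F}_N(0)=F(0)=0$, whereas the paper treats it as a trivial special case.
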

\begin{proof}
  The proof can be found in the~\cref{proof}\hspace{-0.4em}.
\end{proof}

\subsection{McDiarmid bounds}\label{sec:McDiarmid}
Alternatively, we may obtain a bound on the radius of the $\ell_1$-ambiguity set based on the following well-known 
measure concentration result. 
\begin{lem}[McDiarmid's inequality{\cite[Thm. 6.2]{boucheron2013concentration}}] \label{thm:McDiarmid:main}
If a function $f{}:{}\mathcal{W}^N \rightarrow \Re$ has the bounded differences property, i.e., there 
exist some constants $c_1, \ldots, c_N \geq 0$ such that, 
\begin{equation} \label{eq:bounded-differences}
\sup_{\substack{w_1,\ldots, w_N\\w_i' \in \mathcal{W}}}
  \lvert f(w_1,{\ldots}, w_N){-}f(w_1,{\ldots}, w_i',{\ldots} w_N)\rvert \leq c_i,
\end{equation}
and $\{w_i\}_{i=1}^N$ are independent random variables, then 
\begin{equation}
  \prob [f(w_1, \ldots, w_N) - \E[f(w_1, \ldots, w_N)] > \epsilon] \leq e^{\frac{-\epsilon^2}{2v}},
\end{equation}
where 
\(
  v \dfn \frac{1}{4} \sum_{i=1}^N c_i^2.
\)
\end{lem}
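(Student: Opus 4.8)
The plan is to establish this classical concentration bound through the Doob martingale construction paired with the Chernoff bounding method. First I would associate with $f$ the martingale $V_k \dfn \E[f(w_1, \ldots, w_N) \mid w_1, \ldots, w_k]$ for $k = 0, 1, \ldots, N$, adopting the convention that conditioning on no variables gives $V_0 = \E[f]$, while full conditioning gives $V_N = f(w_1, \ldots, w_N)$. Introducing the martingale differences $D_k \dfn V_k - V_{k-1}$, the centered quantity of interest telescopes as $f(w_1, \ldots, w_N) - \E[f] = \sum_{k=1}^N D_k$, and the tower property guarantees $\E[D_k \mid w_1, \ldots, w_{k-1}] = 0$.

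Next I would invoke the exponential Markov (Chernoff) inequality: for every $\lambda > 0$,
\begin{equation}
    \prob\big[f - \E[f] > \epsilon\big] \leq e^{-\lambda \epsilon}\, \E\big[e^{\lambda \sum_{k=1}^N D_k}\big].
\end{equation}
The moment generating function on the right is then bounded by peeling off the factors one at a time, conditioning successively on $w_1, \ldots, w_{k-1}$ and using that $e^{\lambda D_k}$ is determined once $w_1, \ldots, w_k$ are fixed.

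The crux of the argument --- and the step I expect to be the main obstacle --- is to control each conditional factor $\E[e^{\lambda D_k} \mid w_1, \ldots, w_{k-1}]$. Here the independence of the $w_i$ lets me write $V_k$ as an expectation of $f$ over the remaining coordinates $w_{k+1}, \ldots, w_N$, so that, for fixed $w_1, \ldots, w_{k-1}$, varying the single coordinate $w_k$ alters $V_k$ only by an average of terms each bounded in magnitude by $c_k$ through the bounded differences property~\eqref{eq:bounded-differences}. Consequently the conditional range of $D_k$ is at most $c_k$, and since $D_k$ is conditionally centered, Hoeffding's lemma yields $\E[e^{\lambda D_k} \mid w_1, \ldots, w_{k-1}] \leq e^{\lambda^2 c_k^2/8}$.

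Iterating this bound over $k = N, N-1, \ldots, 1$ collapses the moment generating function to $\E[e^{\lambda(f - \E[f])}] \leq e^{\lambda^2 \sum_{k=1}^N c_k^2 / 8} = e^{\lambda^2 v/2}$, whence $\prob[f - \E[f] > \epsilon] \leq e^{-\lambda \epsilon + \lambda^2 v/2}$. Minimizing the exponent over $\lambda > 0$, with minimizer $\lambda = \epsilon/v$, produces the advertised tail bound $e^{-\epsilon^2/(2v)}$. The only nonroutine ingredient is the conditional range estimate of the previous paragraph; the remaining manipulations are standard martingale and Chernoff bookkeeping.
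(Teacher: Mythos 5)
Your proof is correct, but there is nothing in the paper to compare it against: the paper does not prove this lemma, it imports it wholesale by citation from Boucheron, Lugosi and Massart \cite[Thm.~6.2]{boucheron2013concentration}. What you have written is the classical martingale proof due to McDiarmid, and every step checks out: the Doob martingale $V_k \dfn \E[f \mid w_1,\ldots,w_k]$ with telescoping differences $D_k$; the key conditional range bound (given $w_1,\ldots,w_{k-1}$, the range of $D_k$ is at most $c_k$), which as you rightly emphasize uses independence to write $V_k$ as an average of $f$ over the remaining coordinates so that the bounded differences property \eqref{eq:bounded-differences} transfers to the conditional expectations; Hoeffding's lemma giving $\E[e^{\lambda D_k}\mid w_1,\ldots,w_{k-1}] \leq e^{\lambda^2 c_k^2/8}$; and the Chernoff optimization at $\lambda = \epsilon/v$. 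The constants are consistent: $\sum_{k=1}^N c_k^2/8 = v/2$ with $v = \tfrac14 \sum_{k=1}^N c_k^2$, yielding exactly the stated tail $e^{-\epsilon^2/(2v)}$. It is worth noting that the cited monograph establishes this result by a genuinely different route --- the entropy method (a modified logarithmic Sobolev inequality combined with Herbst's argument), which is the theme of its Chapter 6 and extends to functionals beyond bounded differences --- whereas your martingale argument is the original, more elementary one; for this particular statement the two approaches deliver identical constants, so your proof is a perfectly adequate self-contained substitute for the citation.
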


\begin{thm}  \label{thm:radius-McDiarmid}
    For a probability space of dimension $\dimProbSpace$, sample size $N$, and any given confidence level $1-\alpha$, \cref{eq:prob-inclusion} holds with 
  \begin{multline} \label{eq:radius-McDiarmid}
  r= \rmcd(\alpha, \dimProbSpace, N) \\\dfn \sqrt{-\frac{2 \ln(\alpha)}{N}} + \sqrt{\frac{2 (\dimProbSpace-1)}{\pi N}} + \frac{4 \dimProbSpace^{\nicefrac{1}{2}} (\dimProbSpace-1)^{\nicefrac{1}{4}}}{N^{\nicefrac{3}{4}}}. 
\end{multline} 
\end{thm}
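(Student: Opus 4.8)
The plan is to treat $f(w_1,\dots,w_N)\dfn\|\hat p-p\|_1$ as a scalar function of the $N$ independent samples, concentrate it around its mean via \Cref{thm:McDiarmid:main}, and bound that mean separately. First I would verify the bounded-differences property \eqref{eq:bounded-differences}: replacing a single sample $w_j$ by $w_j'$ moves at most two coordinates of $\hat p$---the bin of the discarded value and that of the new one---each by $1/N$, so $\|\hat p-\hat p'\|_1\le 2/N$ and, by the reverse triangle inequality, $|f-f'|\le 2/N$. Hence one may take $c_i=2/N$ for all $i$, giving $v=\tfrac14\sum_i c_i^2=1/N$. Applying \Cref{thm:McDiarmid:main} and solving $e^{-\epsilon^2 N/2}=\alpha$ for $\epsilon$ yields, with probability at least $1-\alpha$, $\|\hat p-p\|_1\le\E\|\hat p-p\|_1+\sqrt{-2\ln\alpha/N}$, which supplies the first summand of $\rmcd$.

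It remains to bound $\E\|\hat p-p\|_1=\sum_{i=1}^{\dimProbSpace}\E|\hat p_i-p_i|$, where each $N\hat p_i$ is $\mathrm{Binomial}(N,p_i)$ since $\hat p_i=\tfrac1N\sum_j\1_i(w_j)$. The naive Jensen bound $\E|\hat p_i-p_i|\le\sqrt{\mathrm{Var}(\hat p_i)}=\sqrt{p_i(1-p_i)/N}$ carries the constant $1$, whereas the second summand of $\rmcd$ demands the sharp factor $\sqrt{2/\pi}$, i.e.\ the mean absolute deviation of the Gaussian limit ($\E|Z|=\sigma\sqrt{2/\pi}$ for $Z\sim\mathcal N(0,\sigma^2)$). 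I would therefore invoke a sharp non-asymptotic estimate of the binomial mean absolute deviation of the form
\[
\E|\hat p_i-p_i|\le\sqrt{\tfrac{2p_i(1-p_i)}{\pi N}}+\frac{4\,(p_i(1-p_i))^{1/4}}{N^{3/4}},
\]
whose leading term is exactly the normal approximation and whose remainder quantifies the Berry--Esseen-type gap. Establishing this per-coordinate inequality, uniformly over $p_i$ and $N$, is the crux of the argument and the step I expect to be the main obstacle.

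Given the per-coordinate bound, the two resulting sums over $i$ are dispatched with Cauchy--Schwarz and the simplex constraint $\sum_i p_i=1$. For the leading term, concavity of $\sqrt{\argdot}$ yields $\sum_i\sqrt{p_i(1-p_i)}\le\sqrt{\dimProbSpace\sum_i p_i(1-p_i)}=\sqrt{\dimProbSpace(1-\|p\|_2^2)}\le\sqrt{\dimProbSpace-1}$, using $\|p\|_2^2\ge1/\dimProbSpace$; this produces $\sqrt{2(\dimProbSpace-1)/(\pi N)}$. For the remainder, a second Cauchy--Schwarz step gives $\sum_i(p_i(1-p_i))^{1/4}\le\sqrt{\dimProbSpace}\,\bigl(\sum_i\sqrt{p_i(1-p_i)}\bigr)^{1/2}\le\sqrt{\dimProbSpace}\,(\dimProbSpace-1)^{1/4}$, yielding the third summand $4\dimProbSpace^{1/2}(\dimProbSpace-1)^{1/4}/N^{3/4}$. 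Summing the three contributions bounds $\|\hat p-p\|_1$ by $\rmcd$ with probability at least $1-\alpha$, and since $p\in\simplex_{\dimProbSpace}$ the event $\{\|\hat p-p\|_1\le r\}$ coincides with $\{p\in\ambTV_r(\hat p)\}$, establishing \eqref{eq:prob-inclusion}.
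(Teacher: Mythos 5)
Your proposal follows the same two-step strategy as the paper's proof: apply McDiarmid's inequality to $\psi=\|\hat p-p\|_1$ with bounded differences $c_i=2/N$ (hence $v=1/N$), solve $e^{-N\epsilon^2/2}=\alpha$ for the deviation term, and add a bound on $\E\|\hat p-p\|_1$. The concentration half is identical to the paper's and correct. The difference is in how the expectation bound is handled: the paper does not prove it but cites \cite[Lemma 7]{kamath2015learning}, which states exactly \eqref{eq:expectationTV}, whereas you attempt to derive it. Your reduction to per-coordinate binomial mean absolute deviations and the two Cauchy--Schwarz aggregation steps ($\sum_i\sqrt{p_i(1-p_i)}\le\sqrt{\dimProbSpace-1}$ and $\sum_i(p_i(1-p_i))^{1/4}\le\dimProbSpace^{1/2}(\dimProbSpace-1)^{1/4}$) are correct and reproduce the cited bound term by term. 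However, the per-coordinate estimate $\E|\hat p_i-p_i|\le\sqrt{2p_i(1-p_i)/(\pi N)}+4(p_i(1-p_i))^{1/4}/N^{3/4}$ is asserted rather than proven, and you rightly flag it as the main obstacle: a sharp, non-asymptotic binomial mean-absolute-deviation bound of this form is genuinely non-trivial (it requires De Moivre-type exact expressions or a quantitative normal-approximation argument, not just Jensen), and it is essentially the content of the result the paper outsources to the citation. So your argument is sound modulo that known lemma; in a final write-up the cleanest fix is to close this gap exactly as the paper does, by invoking \cite[Lemma 7]{kamath2015learning} at that point instead of reproving it.
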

\begin{proof}
First, we define a function $\psi: \mathcal{W}^N \rightarrow [0,2]$ and show 
that it satisfies the bounded differences condition~\eqref{eq:bounded-differences}:
\begin{align*}
    \psi(w_1, \ldots, w_N)  \dfn\lVert \hat{p} - p \rVert_1 
                             = \sum_{j=1}^{\dimProbSpace} \big \lvert \hat{p}_j (w_1,\ldots,w_N) - p_j \big \rvert.
\end{align*}
Due to the discrete support of \(w_i\), modifying
% one of the values
\(w_i = l \in \mathcal{W}\) to
\(w_i'= m \in \mathcal{W}\)
corresponds to decreasing $\hat{p}_l$, 
and increasing $\hat{p}_m$ by an amount \(\nicefrac{1}{N}\).
For ease of notation, we omit the function arguments and 
define $\psi' \dfn \psi(w_1,\ldots,w_i',\ldots, w_N)$ and $\hat{p}' \dfn \hat{p}(w_1,\ldots,w_i',\ldots, w_N)$, so that 
\begin{align*}
\psi' &= \| \hat{p}' - p \|_1 \\ 
        &= \lvert \hat{p}_l' - p_l \rvert + \lvert \hat{p}_m'  - p_m \rvert + \sum_{\substack{j=1,\\j\neq l,j\neq m}}^{\dimProbSpace} \lvert \hat{p}_j - p_j \rvert\\[-0.7em]
        &= \left \lvert \hat{p}_l - p_l - \tfrac{1}{N} \right \rvert + 
        \left \lvert \hat{p}_m  - p_m + \tfrac{1}{N} \right \rvert +
                \sum_{\substack{j=1,\\j\neq l,j\neq m}}^{\dimProbSpace} \lvert \hat{p}_j - p_j \rvert.
\end{align*}
Thus, \eqref{eq:bounded-differences} holds with \(c_{i} = \nicefrac{2}{N}\), and consequently \(v = \nicefrac{1}{N}\).
By \Cref{thm:McDiarmid:main}, then, 
\begin{equation} \label{eq:mcDiarmid:implicit}
 \prob \big[ \| \hat{p} - p \|_1 > \epsilon + \E[\| \hat{p} - p \|_1] \big] {}\leq{} e^{-\frac{N\epsilon^2}{2}}. 
\end{equation}
Moreover, from \cite[Lemma 7]{kamath2015learning}, we obtain a tight upper bound for the expected $\ell_1$-norm
of the estimation error:
\begin{equation}\label{eq:expectationTV}
\E\big[\lVert \hat{p} - p \rVert_1 \big] \leq \sqrt{\frac{2(\dimProbSpace-1)}{\pi N}} + \frac{4 \dimProbSpace^{\nicefrac{1}{2}}(\dimProbSpace-1)^{\nicefrac{1}{4}}}{N^{\nicefrac{3}{4}}}. 
\end{equation}
Using this result, \cref{eq:mcDiarmid:implicit} can easily be brought into the required form:
Let $\alpha = e^{-\frac{N \epsilon^2}{2}} \Rightarrow \epsilon = \sqrt{\nicefrac{2\ln(\nicefrac{1}{\alpha})}{N}}$, and substitute~\cref{eq:expectationTV}
into \eqref{eq:mcDiarmid:implicit} to obtain the bound~\eqref{eq:radius-McDiarmid}.
\end{proof}

The behavior of both bounds in terms of the sample size $N$ is similar; both decrease with $N$ as $\mathcal{O}(\nicefrac{1}{\sqrt{N}})$. 
In terms of $\dimProbSpace$, however, by virtue of~\eqref{eq:expectationTV}, $\rmcd \sim \mathcal{O}(\dimProbSpace^{\frac{3}{4}})$. This is an improvement to 
$\rdkw$, which increases linearly with $\dimProbSpace$. See~\Cref{sec:numerical:tightness} for a numerical comparison between these bounds and an empirical estimation of their tightness.

% \input{Sections/Approach}
%!TEX root = ./main.tex

\section{Design of distributionally robust controllers} \label{sec:controller-design}

We revisit the Lyapunov-type stability condition~\eqref{eq:DRMS-condition-Lyap}, and 
restate it in a slightly more general form that is more convenient when applied 
for constructing stabilizing terminal conditions for a receding horizon strategy. 

We denote the closed-loop dynamics corresponding to $w_t = i$ by $f_i(x, Kx) = A_i + B_i K x$, define 
$\ell(x,u) \dfn \trans{x} Qx{}+{}\trans{u} Ru$ with $Q{}\in{}\psd{n_x}$ and $R\in\pd{n_u}$, and denote 
the quadratic candidate Lyapunov function as $V(x) \dfn \trans{x} P x$.
Due to the homogeneity of \eqref{eq:DRMS-condition-Lyap}, we may replace the strict inequality by 
a non-strict inequality and introduce the negative definite quadratic form ${-\ell(x, Kx)}$ in the right-hand side, 
to obtain the equivalent condition that for all $x\in\Re^{n_x}$,
\begin{equation}\label{eq:LyapSupport}
\exists P \in \pd{n_x} : \max_{\mu \in \amb}\sum_{i=1}^k \mu_i V(f_i(x,Kx)) \leq V(x) - \ell(x,Kx).
\end{equation}

In this section, we shall assume that $\amb=\ambTV(\hat{p})$. Since $\amb$ is then a polytope, it has a finite set of extreme points, 
that is, $\amb = \conv\{a^l\}_{l=1}^{n_\amb}$. Since the maximum of a convex
function over a polytope is attained at an extreme point~\cite[Thm. 32.2]{rockafellar2015convex}, 
\eqref{eq:LyapSupport} is equivalent to
\(
      \sum_{i=1}^{\dimProbSpace} a^l_i V(f_i(x,Kx))
{}\leq{}
      V(x) - \ell(x,Kx)
\)
for all $l\in\N_{[1, n_{\amb}]}$.
However, the enumeration of the vertices of 
$\amb$ is typically computationally intensive and $n_{\amb}$ grows rapidly with 
$\dimProbSpace$ (see \Cref{sec:example-methods} for timings).

We therefore present a methodology for the determination of a gain $K$ and a matrix
$P$ that satisfies \eqref{eq:LyapSupport} for the $\ell_1$-ambiguity set $\ambTV_r(\hat{p})$,
without enumerating its vertices. 
This methodology is based on the following lemma.
\begin{lem} \label{lem:reform-exact}
Let $v(x)=(v_1(x),\ldots,v_{\dimProbSpace}(x))$ with 
\[
  v_i(x) {}\dfn{} V(f_i(x,Kx)){}={} \trans{x} \trans{(A_i+B_i K)} P (A_i+B_i K) x,
\]
for $x \in \Re^{n_x}$, and let $\hat{p} \in \simplex_{\dimProbSpace}$ denote the empirical estimate~\eqref{eq:empirical-estimate}.
For an $\ell_1$-ambiguity set of radius $r$ around $\hat{p}$, i.e., $\amb = \ambTV_r (\hat{p})$,
the distributionally robust stability condition~\eqref{eq:LyapSupport} 
is equivalent to the existence of $\dimProbSpace$ functions $z_i{}:{}\Re^{n_x} \rightarrow \Re$, such that
% \begin{multline}\label{eq:LyapSupportConv4}
for all $i,j\in\N_{[1,k]}$ and $x\in\Re^{n_x}$,
\begin{equation}\label{eq:LyapSupportConv4}
      v_i(x) - z_i(x) \pm r z_j(x) + \trans{z(x)}\hat{p}
{}\leq{} 
      V(x)-\ell(x,Kx).
\end{equation} 
% \end{multline}
\end{lem}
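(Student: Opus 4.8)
The plan is to read the left-hand side of \eqref{eq:LyapSupport} as a support function and to dualize it with conjugate calculus. Fixing $x$ and writing $v=v(x)$ for the vector $(v_i(x))_{i=1}^{\dimProbSpace}$ and $h(x)\dfn V(x)-\ell(x,Kx)$, note that $\sum_{i=1}^{\dimProbSpace}\mu_i v_i(x)=\trans{\mu}v$, so that $\max_{\mu\in\amb}\sum_i \mu_i v_i(x)=\sigma_{\amb}(v)$ with $\amb=\ambTV_r(\hat{p})$. Hence \eqref{eq:LyapSupport} is, pointwise in $x$, nothing but $\sigma_{\ambTV_r(\hat{p})}(v)\le h(x)$, and the whole task reduces to rewriting $\sigma_{\ambTV_r(\hat{p})}$ in a form whose sublevel description is the linear family \eqref{eq:LyapSupportConv4}.

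The key structural observation is that the ambiguity set is an intersection, $\ambTV_r(\hat{p})=\simplex_{\dimProbSpace}\cap\ball_1(\hat{p},r)$. I would therefore invoke the standard rule that the support function of an intersection is the infimal convolution of the two support functions~\cite{rockafellar2015convex}. Both pieces are elementary: $\sigma_{\simplex_{\dimProbSpace}}(w)=\max_i w_i$, and, since $\ell_\infty$ is dual to $\ell_1$, $\sigma_{\ball_1(\hat{p},r)}(z)=\trans{\hat{p}}z+r\|z\|_\infty$. This produces
\[
  \sigma_{\ambTV_r(\hat{p})}(v)=\inf_{z\in\Re^{\dimProbSpace}}\Big[\max_i\big(v_i-z_i\big)+\trans{\hat{p}}z+r\|z\|_\infty\Big].
\]

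It then remains to unfold the bracketed objective at a minimizer $z$. The inner maximum $\max_i(v_i-z_i)$ splits into one inequality per index $i$, while $r\|z\|_\infty=\max_j\max\{\,r z_j,\,-r z_j\,\}$ contributes the term $\pm r z_j$ ranging over every $j$ and both signs; collecting these reproduces exactly \eqref{eq:LyapSupportConv4}. The implication from \eqref{eq:LyapSupportConv4} to \eqref{eq:LyapSupport} is the easy direction: any $z$ satisfying \eqref{eq:LyapSupportConv4} evaluates the bracket at a value $\le h(x)$, and the always-valid inequality $\sigma_{\ambTV_r(\hat{p})}(v)\le\inf_z[\cdots]$ (weak duality for the decomposition) then yields $\sigma_{\ambTV_r(\hat{p})}(v)\le h(x)$.

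The main obstacle is the converse, which needs the infimal-convolution identity to hold \emph{with equality} and the infimum to be \emph{attained}, so that an honest certificate $z(x)$ exists rather than a merely approximating sequence. I would settle both at once by checking the relative-interior constraint qualification: since $r>0$, the strictly positive point $(1-\epsilon)\hat{p}+\tfrac{\epsilon}{\dimProbSpace}1_{\dimProbSpace}$ lies in $\mathrm{ri}(\simplex_{\dimProbSpace})\cap\mathrm{ri}(\ball_1(\hat{p},r))$ for all small $\epsilon>0$, which guarantees exactness of the decomposition and attainment of the infimum. Finally, because the argument is carried out independently at each $x$, the pointwise minimizers assemble into functions $z_i:\Re^{n_x}\to\Re$, so the equivalence holds for all $x$ with a single family $(z_i)$, completing the proof.
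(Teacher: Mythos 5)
Your proof is correct and follows essentially the same route as the paper's: reading the left-hand side of \eqref{eq:LyapSupport} as $\sigma_{\amb}(v(x))$, decomposing $\amb=\simplex_{\dimProbSpace}\cap\ball_1(\hat{p},r)$, expressing the support function of the intersection as the exact infimal convolution of the two (easily computed) support functions, and unfolding the max/$\ell_\infty$ terms into the family \eqref{eq:LyapSupportConv4}. The only cosmetic difference is that you certify exactness and attainment via Rockafellar's relative-interior constraint qualification (which you verify explicitly), whereas the paper invokes the Attouch--Br\'ezis theorem for the same purpose; if anything, your treatment of the attainment issue is slightly more explicit than the paper's.
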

\begin{proof}

The left-hand side of the inequality in~\eqref{eq:LyapSupport} is equivalent to the definition of 
the support function $\sigma_\amb(v(x))$ of $\amb$, evaluated at $v(x)$. 
Computing $\sigma_\amb (v(x))$ directly is seemingly not an easy task.
However, $\amb$ can be written as the 
intersection of two sets with easily computable support functions: 
\(
 \amb = \simplex_{\dimProbSpace} \cap C,
\)
where $C \dfn \ball_1(\hat{p},r)$. In fact,
 \begin{subequations}\label{eq:support_fcns}
 \begin{align}
 \sigma_{\simplex_{\dimProbSpace}}(v)&=\max\{v_1,\ldots,v_\dimProbSpace\}\\
 \sigma_{C}(v)&=r\|v\|_\infty+ \trans{v} \hat{p}.
 \end{align}
 \end{subequations}
By~\cite[Ex.~13.3(i)]{BauschkeCombettes2017}, we have that
\[ 
\sigma_{\simplex_{\dimProbSpace} \cap C}(v) =
\delta^{*}_{\simplex_{\dimProbSpace} \cap C}(v) = 
(\delta_{\simplex_{\dimProbSpace}} + \delta_C)^*(v).
\]
Thus, by the Attouch-Br{\'e}zis theorem~\cite[Thm.~15.3]{BauschkeCombettes2017}, 
\[
      \sigma_{\simplex_{\dimProbSpace}\cap C}(v)
{}={}
      (\sigma_{\simplex_{\dimProbSpace}} {}\oblong{} \sigma_{C})(v),
\]
where $\oblong$ denotes the infimal convolution, given by
\begin{align*}
      (\sigma_{\simplex_{\dimProbSpace}}\oblong\sigma_{C})(v)
{}={}&
      \inf_{z} \sigma_{\simplex_{\dimProbSpace}}(v{}-{}z)
      {}+{}
      \sigma_{C}(z).
\end{align*}
Therefore we can equivalently express~\eqref{eq:LyapSupport} as
\begin{equation}\label{eq:LyapSupportConv}
    \inf_{z} 
      \sigma_{\simplex_{\dimProbSpace}}(v(x)-z)
{}+{}
      \sigma_{C}(z)\leq V(x)-\ell(x, Kx),
\end{equation}
for all $x\in\Re^{n_x}$.
Eq.~\eqref{eq:LyapSupportConv} is true if and only if there exists a 
\(
      z(x)
{}={}
      (z_1(x), \ldots, z_{\dimProbSpace}(x))
\) such that 
\begin{equation}\label{eq:LyapSupportConv2}
\sigma_{\Delta_{\dimProbSpace}}(v(x)-z(x)) {}+{} \sigma_{C}(z(x)) {}\leq{} V(x) {}-{} \ell(x, Kx),
\end{equation}
for all $x\in\Re^{n_x}$. 
Using~\eqref{eq:support_fcns}, we express~\eqref{eq:LyapSupportConv2} as
\begin{multline}\label{eq:LyapSupportConv3}
\max_{i\in\N_{[1,\dimProbSpace]}}
\{v_i(x)-z_i(x)\} + r\|z(x)\|_\infty + \trans{z(x)}\hat{p}\\
\leq V(x) - \ell(x, Kx)
\end{multline}
In turn, this is true if and only if
\begin{equation*}
      v_i(x) {}-{} z_i(x) {}\pm{} r z_j(x) {}+{} \trans{z(x)}\hat{p} 
{}\leq{} 
      V(x)-\ell(x,Kx)
\end{equation*}
for all $i,j\in\N_{[1,k]}$ and $x\in\Re^{n_x}$, which is exactly condition~\eqref{eq:LyapSupportConv4}.
\end{proof}
We shall proceed by assuming that the components of $z(x)$ are quadratic functions 
of $x$ of the form  $z_i(x)=\trans{x} H_ix$, where 
$H_i\in\Re^{n_x{}\times{}n_x}$ are symmetric matrices,
which allows to cast \eqref{eq:LyapSupportConv4} as a set of 
$2\dimProbSpace^2$ matrix inequalities
\begin{multline}\label{eq:matrix-inequality}
    \trans{(A_i + B_i K)} P (A_i + B_i K) - H_i \pm r H_j + \\
    \textstyle\sum_{l=1}^k \hat{p}_l H_l - P + Q + \trans{K} R K 
{}\preccurlyeq{} 
    0. 
\end{multline}
for $i,j\in\N_{[1, \dimProbSpace]}$, which can be described by an LMI 
as shown in the following proposition.
\begin{proposition}\label{prop:LMIreform}
 The matrix inequality \eqref{eq:matrix-inequality} is equivalent to the LMIs
 \[ 
        \smallmat{
             -W -\hat{H}_i \pm r \hat{H}_j + \sum_{l=1}^{k} \hat{p}_l \hat{H}_l  & W \trans{A}_i + \trans{Z} \trans{B}_i  & WQ^{\frac{1}{2}} & \trans{Z} R^{\frac{1}{2}}\\ 
             *   &  -I_{n_x}     &  0               & 0                \\
             *   &  *            & -I_{n_u}         & 0                \\ 
             *   &  *            & *                &  -W
         }
 {}\preccurlyeq{}
       0
 \]
 for all $i,j\in\N_{[1, \dimProbSpace]}$, 
 where $P {}\dfn{} W^{-1}$, for $W\in\pd{n_x}$, 
       $\hat{H}_i = WH_iW$ and $Z = KW$.
\end{proposition}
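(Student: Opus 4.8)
The plan is to establish the equivalence by combining a Schur-complement expansion with a congruence transformation, under the change of variables $W \dfn P^{-1}$, $Z \dfn KW$ and $\hat{H}_l \dfn W H_l W$. Since $P \in \pd{n_x}$ if and only if $W \in \pd{n_x}$, and since $K = ZW^{-1}$ and $H_l = W^{-1}\hat{H}_l W^{-1}$ recover the original variables uniquely (with $\hat{H}_l$ symmetric whenever $H_l$ is), this substitution is a bijection between feasible tuples $(P,K,\{H_l\})$ and $(W,Z,\{\hat{H}_l\})$. It therefore suffices to transform~\eqref{eq:matrix-inequality} into the displayed LMI through a chain of reversible steps. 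Throughout, both sign choices in $\pm$ are treated simultaneously, as every manipulation is insensitive to this choice.

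First I would isolate the three terms of~\eqref{eq:matrix-inequality} that are quadratic in the unknowns, namely $\trans{\Ac_i}P\Ac_i$, $Q = \trans{(Q^{\frac{1}{2}})} Q^{\frac{1}{2}}$ and $\trans{K}RK = \trans{(R^{\frac{1}{2}}K)} R^{\frac{1}{2}}K$, and write each as $\trans{N}T^{-1}N$ with $T$ positive definite: here $T = W = P^{-1}$ for the first, $T = I_{n_x}$ for the second, $T = I_{n_u}$ for the third. Collecting the remaining affine part into $M_{ij} \dfn -H_i \pm rH_j + \sum_{l=1}^{k}\hat{p}_l H_l - P$, inequality~\eqref{eq:matrix-inequality} reads $M_{ij} + \trans{\Ac_i}W^{-1}\Ac_i + Q + \trans{K}RK \preccurlyeq 0$. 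Applying the generalized Schur-complement lemma to all three quadratic terms at once (their mutual coupling blocks vanish, so the Schur complement of the block-diagonal tail is exactly their sum) turns this into the equivalent block inequality
\[
\smallmat{
 M_{ij}          & \trans{\Ac_i} & Q^{\frac{1}{2}} & \trans{K}R^{\frac{1}{2}} \\
 \Ac_i           & -W            & 0               & 0 \\
 Q^{\frac{1}{2}} & 0             & -I_{n_x}        & 0 \\
 R^{\frac{1}{2}}K & 0            & 0               & -I_{n_u}
} \preccurlyeq 0,
\]
which holds precisely because $W, I_{n_x}, I_{n_u} \succ 0$.

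The remaining step is to linearize this inequality by congruence with the nonsingular block-diagonal matrix $T$ whose diagonal blocks are $W, I_{n_x}, I_{n_x}, I_{n_u}$; since $T$ is invertible, $\trans{T}XT \preccurlyeq 0 \Leftrightarrow X \preccurlyeq 0$. Multiplying the first block row and column by $W$ (symmetric) and using the identities $WPW = W$ (so that $WM_{ij}W = -W -\hat{H}_i \pm r\hat{H}_j + \sum_{l} \hat{p}_l\hat{H}_l$), $W\trans{\Ac_i} = W\trans{A}_i + \trans{Z}\trans{B}_i$, $WQ^{\frac{1}{2}} = WQ^{\frac{1}{2}}$ and $W\trans{K}R^{\frac{1}{2}} = \trans{Z}R^{\frac{1}{2}}$, I obtain exactly the asserted LMI, now affine in $(W, Z, \{\hat{H}_l\})$ (a final congruence by a permutation matrix fixes the ordering of the diagonal blocks to match the statement). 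The main obstacle is purely bookkeeping: keeping the Schur-complement sign convention consistent and verifying that each transformed off-diagonal block lands opposite its paired diagonal term, according to $\Ac_i \leftrightarrow -W$, $Q^{\frac{1}{2}} \leftrightarrow -I_{n_x}$, $R^{\frac{1}{2}}K \leftrightarrow -I_{n_u}$. Once this pairing is fixed, the identities above match every block, and the reversibility of both the Schur-complement and congruence steps closes the equivalence.
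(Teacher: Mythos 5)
Your proof is correct and takes essentially the same route as the paper's, with the two steps in the opposite order: the paper first performs the congruence (pre- and post-multiplying \eqref{eq:matrix-inequality} by $W$) and then applies a single Schur complement with respect to the block $D \dfn \blkdiag(W, I_{n_x+n_u})$, whereas you apply the Schur complement in the original variables $(P,K,\{H_l\})$ first and carry out the congruence by $\blkdiag(W, I_{n_x}, I_{n_x}, I_{n_u})$ afterwards. Both use the identical splitting of the quadratic terms and the identical change of variables, so the difference is purely one of ordering.

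The one point to fix is your closing claim that a final congruence by a permutation matrix ``fixes the ordering of the diagonal blocks to match the statement.'' A permutation congruence acts on rows and columns simultaneously, so it carries each off-diagonal block along with its paired diagonal block: it can reorder the three pairs, but it can never re-pair them. Consequently your derived inequality, with pairings $W\trans{A}_i + \trans{Z}\trans{B}_i \leftrightarrow -W$, $WQ^{\frac{1}{2}} \leftrightarrow -I_{n_x}$, $\trans{Z}R^{\frac{1}{2}} \leftrightarrow -I_{n_u}$, cannot be permuted into the LMI as printed, whose diagonal reads $(-I_{n_x}, -I_{n_u}, -W)$ against that same ordering of off-diagonal blocks. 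No such step is needed, however: the printed ordering breaks the Schur pairing and is dimensionally inconsistent whenever $n_x \neq n_u$, so it is evidently a typo in the proposition; the correctly paired LMI --- the one you obtained, and the one the paper's own proof yields upon expanding its final $2\times 2$ block inequality (whose lower-right block should likewise read $-D$ rather than $D$) --- is the intended statement.
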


\begin{proof}
We pre- and post- multiply \eqref{eq:matrix-inequality} by $W$ to obtain, 
\begin{multline}
      \trans{(A_i W + B_i Z)} W^{-1} (A_i W + B_i Z) - W + 
\\
	-\hat{H}_i \pm r\hat{H}_j 
	+ \textstyle\sum_{l=1}^k p_l \hat{H}_l + \trans{(W Q^{\frac{1}{2}})} (W Q^{\frac{1}{2}}) -
\\
	\trans{(Z R^{\frac{1}{2}})} (Z R^{\frac{1}{2}}) 
{}\preccurlyeq{} 
      0.
\end{multline}
Now define 
\[ 
\Theta_i \dfn \smallmat{
        A_i W + B_i Z \\ 
        WQ^{\frac{1}{2}} \\ 
        ZR^{\frac{1}{2}}
    }, \; 
D \dfn \smallmat{
               W &  &\\
                      & I_{n_x+n_u}
    }
\]
to obtain 
\(
      \trans{\Theta}_i D^{-1} \Theta_i 
      {}+{} (-W-\hat{H}_i \pm r \hat{H}_j 
      {}+{} \textstyle\sum_{l=1}^{k} p_l \hat{H}_l ) 
{}\preccurlyeq{}
      0, 
\)
which, by the Schur complement lemma \cite[Sec. 2.1]{lmibook}
is equivalent to the LMI
\begin{align*}
\smallmat{
    -W - \hat{H}_i \pm r \hat{H}_j + \sum_{l=1}^{k} p_l \hat{H}_l  & \trans{\Theta}_i\\
    \Theta_i & D  
} 
{}\preccurlyeq{}&
      0,
&
      W 
{}\succ{}& 
      0,
\end{align*}
which expands to the given LMI.
\end{proof}

The assumption that the components of $z(x)$ are quadratic can be justified by noting that 
a mapping $z$ that minimizes the left-hand side in~\eqref{eq:LyapSupportConv} can be taken to be
a piecewise affine function of $v$~\cite{patrinos2011convex}. In fact, due to homogeneity of the support functions, it can be easily seen that $z$ can be taken to be piecewise linear. Therefore, $z(x)$ is piecewise quadratic and homogeneous of degree two.
However, the task of computing the exact expression of $z$ is equivalent to solving a parametric linear program, hence as complex as enumerating the vertices of $\amb$. Therefore, a sensible approximation is to impose that $z(x)$ is simply quadratic. 
Moreover, in~\Cref{sec:numerical}, we demonstrate that in practice, the induced conservativeness is limited, whereas the 
computational advantage of the reformulation in \Cref{prop:LMIreform} compared to vertex enumeration allows us to solve
problems of a significantly larger scale.  

Lastly, note that the derivation leading to~\Cref{lem:reform-exact}
is not limited to $\ell_1$-based --- or even polytopic --- ambiguity sets,
as it can be easily repeated for other ambiguity sets 
which can be described as intersections of convex sets with easily computable support functions. 
% \input{Sections/Stability}
%!TEX root = ./main.tex

\section{Numerical experiments} \label{sec:numerical}

\subsection{Data-driven ambiguity bounds} \label{sec:numerical:tightness}
\begin{figure}[ht!]
\centering
\includegraphics{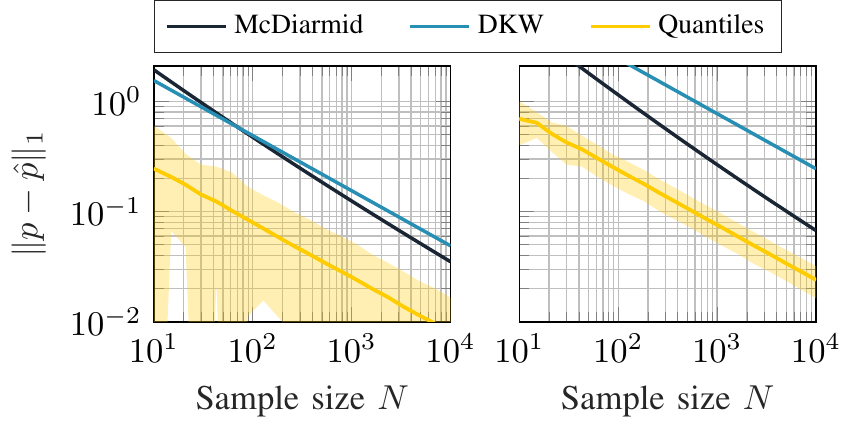}
\caption{Comparison of the derived bounds $\rmcd$ and $\rdkw$ at confidence level 
         $1-\alpha = 0.9$ for \textit{(Left)} $k = 2$ and \textit{(Right)} $k=10$. 
         Additionally, the shaded area is bounded by the empirical 
         0.1 and 0.9 quantiles of $\|p - \hat{p}\|_1$, 
         based on $10,000$ randomly generated data sets. }
\label{fig:n_vs_bounds}
\vspace{-5pt}
\end{figure}

We compare the behavior of the DKW-based radius (\cref{sec:DKW}) and the radius 
based on McDiarmid's inequality (\cref{sec:McDiarmid}) with respect to increasing sample sizes. 
\Cref{fig:n_vs_bounds} shows a comparison for two values of $\dimProbSpace$. 
Since \(\rmcd\) scales better with \(k\) (\(\mathcal{O}({k}^{\frac{3}{4}})\)) compared to $\rdkw$
(\(\mathcal{O}(k)\)), $\rdkw$ is generally lower than
$\rmcd$, especially for large values of \(k\).
However, for very low values of $\dimProbSpace$ and $N$, \Cref{fig:n_vs_bounds} demonstrates
that $\rdkw$ is tighter, albeit only by a small margin.
In practice, we may of course exploit the closed-form expressions to obtain a tighter 
bound which is simply $r = \min\{ \rmcd, \rdkw\}$.
\Cref{fig:amb_vs_N} illustrates the corresponding $\ell_1$-ambiguity sets for $\dimProbSpace=3$.

\begin{figure}[ht!]
\centering
\includegraphics[width=\columnwidth]{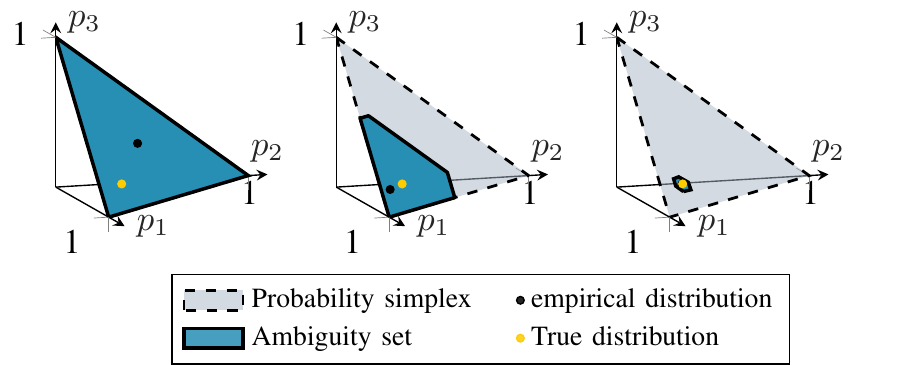}
\caption{Probability estimate $\hat{p}$ and ambiguity set $\ambTV_{\rmcd} (\hat{p})$ 
         at confidence level $1-\alpha = 0.9$ for (Left) $N=10$, 
         (Middle) $N=100$ and (Right) $N=5000$.}
\label{fig:amb_vs_N}
\vspace{-15pt}
\end{figure}

\subsection{Methods for controller design} \label{sec:example-methods}
\subsubsection{Timings}
In~\Cref{sec:controller-design}, we derived an approximation
of the Lyapunov-type stability condition~\eqref{eq:LyapSupport} which removes the 
need to solve as 
many LMIs as the number of vertices $n_\amb$ of the polytopic ambiguity set $\ambTV_r(\hat{p})$.
In \Cref{fig:timings}, we present a comparison of this approach with the 
vertex enumeration approach in terms of 
computational complexity for a system with $n_x = n_u = 2$.
For $\dimProbSpace > 7$, the vertex enumeration approach fails due to excessive
memory requirements caused by the rapid increase of $n_{\amb}$.
On the same machine, using the proposed reformulation, problems of at least 
$\dimProbSpace = 30$ could still be solved without running out of memory. 
Moreover, we observe that simply computing the vertices of $\ambTV_r(\hat{p})$ 
already proves to be more time-consuming a problem than solving the complete LMI of the 
reformulation~\eqref{eq:matrix-inequality}.    

\begin{figure}[ht!]
\centering
\includegraphics{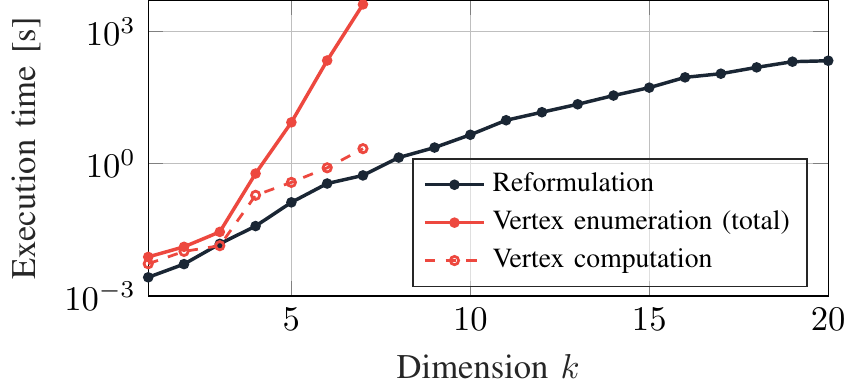}
% \vspace{-5pt}
\caption{Time to solve the LMI formulations of~\cref{eq:LyapSupport}. We compare vertex enumeration and the reformulation in~\Cref{prop:LMIreform}.
The solid lines represent the total time. For the vertex enumeration approach, the dashed line separately shows the time to compute the vertices of the ambiguity set.
(Vertex computations are performed using the MPT~\cite{mpt} toolbox, LMIs were solved using MOSEK~\cite{mosek}, on an Intel Core i7-7700K CPU at 4.20GHz.)}
\label{fig:timings}
% \vspace{-5pt}
\end{figure}

\subsubsection{Approximation quality}
We observe that in practice, the conservativeness introduced by the 
reformulation is often negligible. During experimentation, 
we have not been able to find a system for which no feasible feedback 
gain could be found through the reformulation while there could 
through vertex enumeration. This is further illustrated by the following example. Consider the system with dynamics
\[ 
    A_1 = \smallmat{0.9 & 1 \\ 0 && 0.99}, \; A_2 = \smallmat{ 1.5 & 1 \\ 0 & 2.5}, \; B_1 = B_2 = \smallmat{0\\1}.
\]
For $\hat{p}_1 = \hat{p}_2 = 0.5$, $r=0.1$, $Q=10^{-4} I_2$ and $R=10^{-4}$, we estimate the sets $\mathcal{F}$ and $\hat{\mathcal{F}}$ of feasible control gains for the exact approach (using vertex enumeration) and the reformulated LMI of \Cref{prop:LMIreform}, respectively. That is, $\mathcal{F} \dfn \{ K \in \Re^2\mid\text{\eqref{eq:LyapSupport} holds}\}$, and $\mathcal{\hat{F}} \dfn \{K \in \Re^2 \mid \exists H_i, i\in\N_{[1,\dimProbSpace]}: \text{\eqref{eq:matrix-inequality} holds} \}$. We construct a regular grid of potential feedback gains $K = [K_1 ~ K_2]$ and verify whether a $P$ (or equivalently, $W$) exists such that the involved LMI is satisfied. This point is then marked with the corresponding color in~\Cref{fig:feasibleSet}.
Since feasibility of \eqref{eq:matrix-inequality} implies feasibility of \eqref{eq:LyapSupport}, it follows that $\hat{\mathcal{F}} \subseteq \mathcal{F}$. 
We find that the experimental estimates of $\hat{\mathcal{F}}$ and $\mathcal{F}$ nearly fully overlap. In fact, in this set of 10,000 samples of $K$, only 4 instances out of 2825 that are in $\mathcal{F}$, are not in $\hat{\mathcal{F}}$.  

\begin{figure}[ht!]
\centering
\includegraphics{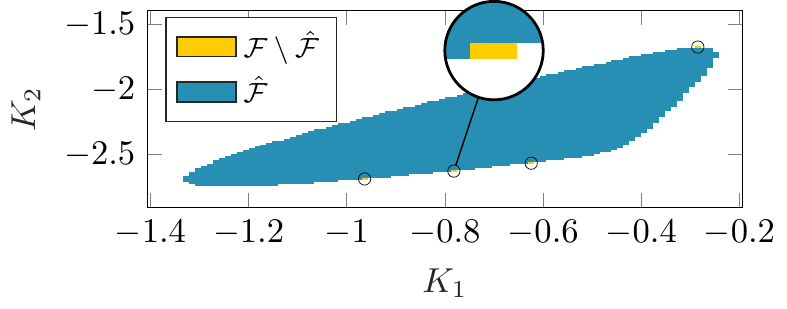}
\caption{Estimates of the feasible set $\mathcal{F}$ and $\hat{\mathcal{F}}$, defined in \Cref{sec:example-methods}.
All sampled points $K \in \mathcal{F}\setminus \mathcal{\hat{F}}$ are encircled. Infeasible points ($K \notin \mathcal{F}$) are left blank.}
\label{fig:feasibleSet}
\end{figure}

% \begin{figure}[ht!]
% \centering
% \includetikz{k_vs_vertices}
% \caption{Number of vertices $n_\amb$ of the $\ell_1$-ambiguity set centered around the uniform distribution.}
% \end{figure}

\subsection{Comparison with stochastic and robust approaches} \label{sec:numerical:vs-stochastic-robust}
The following example demonstrates  
\begin{inlinelist}
  \item the superior sample complexity of the distributionally robust approach 
  over the stochastic approach, based on the bounds obtained in~\Cref{sec:bounds}; and  
  \item the improved applicability 
  in comparison with the robust approach.   
\end{inlinelist}
This comparison is based on the \emph{distributional stability region} of 
the closed-loop system \eqref{eq:closed-loop}.
% This example is based on estimating the \emph{distributional stability region} of an autonomous dynamical system \eqref{eq:closed-loop}.
We define this as the set $\stabRegion$ of all probability vectors $p$ for which the system is MSS. Using the operator $T$, defined in~\eqref{eq:operator-stability}, 
we can denote this set as
\begin{equation} \label{eq:stabMargin-definition}
  \stabRegion \dfn \{ p \in \simplex_k \mid \spectral{(T(p))} < 1 \}.    
\end{equation}
While it is easy to test whether the system is $p$-MSS for some given $p$, 
it does not seem to be easy to determine $\stabRegion$. Indeed, since the
spectral radius of a matrix is generally not convex, aside from very specific cases, 
this set is difficult to analyze.  

However, for the following simple system
\begin{equation} \label{eq:bernoulli-sys}
A_1 = A,\, B_1=B, \; A_2 = 0, \, B_2 = 0,  
\end{equation}
which is of particular interest in networked control systems,
it is shown in~\cite{gatsis2018sample} that $\stabRegion$ for the closed-loop system with 
$u(x) = Kx$ can be written explicitly as
\begin{equation}\label{eq:stabRegionBernoulli}
\stabRegion = \left\{ p \in \simplex_2 {}\left|{}\, p_1 < \tfrac{1}{\spectral({A+BK})^2} \right. \right\}.
\end{equation}
This set simply defines a half-open line segment in $\Re^2$ and is thus convex.  
Using the convexity of this set, we may devise a simple procedure to estimate a lower bound on the confidence 
that a given linear controller is MSS for the true distribution, given only that it is stabilizing 
for $\hat{p}$, which is estimated based on $N$ \iid{} data points.
In fact, we compute \( r^{\sstar} =  \max \{ r \in [0,2] \mid \ambTV_r(\hat{p}) \subseteq \stabRegion \}\). 
Since the inclusion $\ambTV_r(\hat{p}) \subseteq \stabRegion$ can be verified easily using~\eqref{eq:stabRegionBernoulli}, $r^{\sstar}$ 
is readily computed numerically by means of a simple bisection scheme. The bounds derived in~\Cref{sec:bounds} now associate each $r^{\sstar}$
with a lower bound $(1-\alpha^{\sstar}(N))$ on the probability that a closed-loop system is $p$-MSS. We have that 
$\alpha^{\sstar}(N) = \min\{ \amcd^{\sstar}(N) , \adkw^{\sstar}(N)\}$, which, by rearranging the terms in \eqref{eq:radius-DKW} and \eqref{eq:radius-McDiarmid}, and setting $\dimProbSpace=2$, can be shown to be
\begin{align*}
\amcd^{\sstar}(N) &= e^{-\frac{N}{2} \left(\sqrt{\nicefrac{2}{\pi N}} + \nicefrac{2 \sqrt{2}}{N^{\nicefrac{3}{4}}} - r^{\ssstar}\right)^2},\\\vspace{-5pt}
\adkw^{\sstar}(N) &= 2 e^{-\frac{N (r^{\ssstar})^2}{8}}.
\end{align*}
% % In the specific case where $\amb$ is a polytope, it has a finite set of extreme points, that is, $\amb = \conv\{a^l\}_{l=1}^{n_\amb}$. 
% % As such, it is possible for moderate values of $\dimProbSpace$ to check stability for each vertex $a^l$. This method is used for the experiments in~\Cref{sec:numerical:vs-stochastic-robust}.

Consider now the open-loop stochastic jump linear system of the form \eqref{eq:bernoulli-sys}, with
\[ 
    A = \smallmat{1.05 & 1.8 \\ 0 & 1.1}, B = \smallmat{1\\0},
\]
and with unknown distribution $p \in \simplex_2$. Given $N$ \iid{} observations of the disturbance $w$, we obtain an empirical probability estimate $\hat{p}$,
and a feedback gain $K$ according to the stochastic approach, i.e., the closed-loop system satisfies~\eqref{eq:LyapSupport}
for $\amb = \{ \hat{p} \}$. We compute $r^{\sstar}$, such that $\ambTV_{r^{\ssstar}}(\hat{p})$ is a tight under-approximation of $\stabRegion$.
We repeat this process for increasing values of $N$ and plot the corresponding confidence $1-\alpha^{\sstar}(N)$ that the system is $p$-MSS in~\Cref{fig:confidence-vs-samplesize}.

% \begin{figure}[ht!]
% \centering
% \includetikz{approximation_stability_margin}
% \caption{Under-approximation by means of the $\ell_1$-ambiguity set $\ambTV_{r^{\ssstar}}$ of the distributional stability region $\stabRegion$ in \Cref{sec:numerical:vs-stochastic-robust}.}
% \label{fig:stability_margin}
% \end{figure}

Similarly, to evaluate the distributionally robust approach, we compute the largest $r$, such that~\eqref{eq:LyapSupport} is feasible for $\ambTV_r (\hat{p})$ and obtain a feedback gain $K$
from solving the corresponding LMI problem. We again repeat this for increasing values of $N$ and plot the lower bound on $\prob(p \in \ambTV_r (\hat{p}))$ in \Cref{fig:confidence-vs-samplesize}.

Note that the given system is not \LRS, i.e., no linear controller exists that can stabilize the system in mean-square sense for all $p \in \simplex_2$. Therefore, the robust approach is not applicable. However, using the distributionally robust 
approach in a data-driven manner, it suffices to acquire 62 data points in order to find a controller for this same system, which is mean-square stabilizing with over 99.8\% confidence.
By contrast, obtaining similar guarantees from the stochastic approach, requires nearly 50,000 data points. 

\begin{figure}[ht!]
\centering
\includegraphics{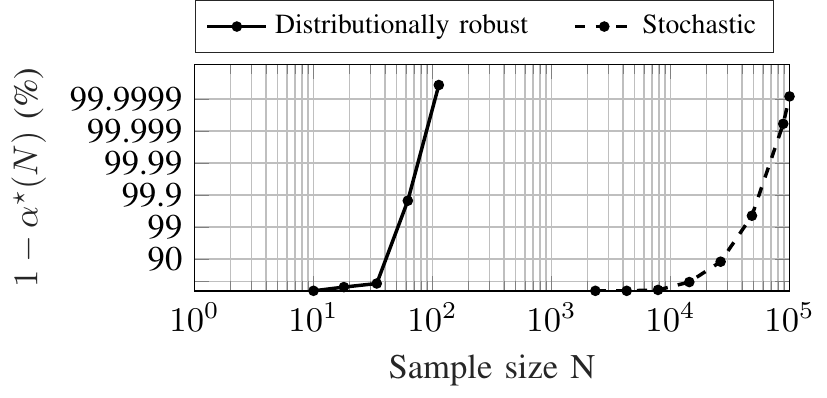}
\caption{Lower bound on the probability of obtaining a MSS controller with respect to the true distribution. For example, in order to be at least $99.8\%$ confident that the closed-loop system is mean-square stable, the stochastic approach requires a sample of about $50,000$ data points, whereas the distributionally robust approach requires merely $62$ data points.}
\label{fig:confidence-vs-samplesize}
\vspace{-10pt}
\end{figure}

%!TEX root=./main.tex
\section{conclusion}
We studied the problem of data-driven synthesis of a static linear state feedback gain for stochastic jump linear systems that grants MSS \textit{with high probability}.
To this end, we adopted a distributionally robust approach, focusing specifically on $\ell_1$-ambiguity sets. 
We derived bounds that guarantee the inclusion of the true distribution in this set at the given confidence level and impose MS stability for all distributions within this ambiguity set.
To efficiently solve this problem, we derived an LMI formulation which approximates the corresponding Lyapunov-type stability condition,
but grows polynomially with the support of the dimension of the probability space. Our findings were illustrated and verified through several numerical experiments.  

In future work, we aim to generalize this methodology to Markovian disturbances and nonlinear systems.
We also aim to study the use of these results to design
terminal conditions for risk-averse risk-constrained model predictive control~\cite{sopasakis2019riskC}.

\bibliographystyle{ieeetr}
\bibliography{references}
\begin{appendix}
%!TEX root = ./main.tex
\begin{appendixproof}{thm:radius-DKW} 
  \label[apdx]{proof}
  Let \(F\) denote the cumulative mass function (cmf) of $w$ and
  define \(\hat{F}\) to be the empirical cumulative distribution given $N$
  samples \(\{w_j\}_{j=1}^{N}\), that is
  \(
    \hat{F}_i \dfn \tfrac{1}{N} \sum_{j=1}^{N}1_{w_j\leq i}.
  \)
  The DKW inequality~\cite{massart1990tight} states that 
  \begin{equation} \label{eq:DKW}
      \prob\left[ \max_{i\in\mathcal{W}} \lvert \hat{F}_i - F_i \rvert > \epsilon \right] \leq 2 e^{-2N\epsilon^2}, \; \epsilon > 0,\, N \in \N.
  \end{equation}
  the cdf $F$ and the probability mass function (pmf) $p$ of a discrete distribution are related as
  \begin{align*}
  \begin{cases}
    p_1 &= F_1\\
    p_i &= F_i - F_{i-1},\; \text{for }i \in \N_{[2,\dimProbSpace]}. 
  \end{cases}
  \end{align*}
The same relation holds between the empirical counterparts \(\hat{p}\) and \(\hat{F}\). 
Therefore, a bound of the form $\lvert \hat{F}_i - F_i \rvert \leq \epsilon$ implies that, for $i > 1$ 
\begin{align*}
    \lvert \hat{p}_i - p_i\rvert &= \lvert \hat{F}_i - \hat{F}_{i-1} - (F_i - F_{i-1}) \rvert \\
                                 & \leq \lvert \hat{F}_i - F_{i} \rvert + \lvert \hat{F}_{i-1} - F_{i-1} \rvert  \leq 2 \epsilon.
        % \underbrace{F_i}_{\leq \hat{F}_i+\epsilon} - \underbrace{F_{i-1}}_{\geq \hat{F}_i-\epsilon} \leq \hat{F}_i - \hat{F}_{i-1} + 2 \epsilon = \hat{p}_N(w_{(i)})  + 2 \epsilon,
\end{align*}
For $i = 1$, this inequality trivially holds as well.
Thus, writing \eqref{eq:DKW} in terms of the pmf, we obtain 
\begin{align*}
    \prob\left[ {\textstyle \max_{i\in\mathcal{W}}} \lvert \hat{p}_i - p_i \rvert > 2 \epsilon \right] &\leq 2 e^{-2N\epsilon^2}\\
    % \Rightarrow \prob\left( \sum_{i\in\mathcal{W}} \lvert \hat{p}_i - p_i \rvert > 2 \dimProbSpace \epsilon \right) &\leq 2^{-2N\epsilon^2}.
    \Rightarrow \prob\left[ \| \hat{p} - p \|_1 > 2 \dimProbSpace \epsilon \right] &\leq 2 e^{-2 N \epsilon^2}.
\end{align*}
Define $r \dfn 2 \dimProbSpace \epsilon$, and \(\alpha \dfn 2e^{-2 N \epsilon^2}\) to obtain~\cref{eq:radius-DKW}.
\end{appendixproof}
\end{appendix}
\end{document}